\newtheorem{theorem}{Theorem}[section]
\newtheorem{lemma}[theorem]{Lemma}
\theoremstyle{definition}
\begin{document}

\title{Complexity of the CNF-satisfiability problem}

\author{Grigoriy V. Bokov\\
        \small Lomonosov Moscow State University\\
        \small E-mail: bokov@intsys.msu.ru}

\maketitle

\begin{abstract}
This paper is devoted to the complexity of the Boolean satisfiability problem. We consider a version of this problem, where the Boolean formula is specified in the conjunctive normal form. We prove an unexpected result that the CNF-satisfiability problem can be solved by a deterministic Turing machine in polynomial time.
\end{abstract}

\noindent \textbf{Keywords:} Boolean satisfiability problem, conjunctive normal form, complexity.


\section{Introduction} \label{S:1}

The Boolean satisfiability problem is a classic combinatorial problem. Given a Boolean formula, this problem is to check whether it is satisfiable, i.e., to find such values of the variables, on which the formula takes on the true value. There are several special cases of the Boolean satisfiability problem in which the formulas are restricted to a particular structure. The CNF-satisfiability problem is a version of the Boolean satisfiability problem, where the Boolean formula is specified in the conjunctive normal form (CNF), i.e., a conjunction of clauses, where each clause is a disjunction of literals, and a literal is a variable or its negation.

Cook~\cite{Coo71} and independently Levin~\cite{Lev73} proved that the CNF-satisfiability problem is $\mathbf{NP}$-complete. This proof shows how every decision problem in the complexity class $\mathbf{NP}$ can be reduced to the Boolean satisfiability problem for formulas in the conjunctive normal form. At the same time, for some restrictions such that 2-CNF-satisfiability or Horn-satisfiability, this problem can be solved in polynomial time or even in linear time~\cite{Kro67,DG84}. This also holds true for the evaluation problem of quantified Boolean formulas~\cite{APT79,BKF95}. All restrictions are generalized by Schaefer's dichotomy theorem that states necessary and sufficient conditions under which the Boolean satisfiability problem for a given restriction to Boolean functions is in $\mathbf{P}$ or $\mathbf{NP}$-complete~\cite{Sch78}.

The aim of this paper is to prove that the CNF-satisfiability problem without restrictions to Boolean formulas is polynomial-time decidable also. We describe a simple polynomial-time algorithm that, for a given Boolean formula in CNF, stops and answers yes if the formula is satisfiable and no otherwise. Besides, if the formula is satisfiable, the algorithm returns its true assignment.



\section{Preliminaries and results} \label{S:2}

\emph{Boolean formulas} are built up from the signature $\{\neg, \wedge, \vee\}$, variables and parentheses in the usual way. We use letters $x, y, z$, etc., to denote variables and literals, and capital letters $A$, $B$, $F$, $G$, etc., to denote propositional formulas.

A \emph{literal} is a variable $x$ or its negation $\overline{x}$ and denoted by $x^a$, where $x^1 = x$ and $x^0 = \overline{x}$. We assume that $\overline{\overline{x}} = x$.  A \emph{clause} is a disjunction of distinct literals. We identify clauses with sets of their literals. A Boolean formula is in the \emph{conjunctive normal form} (CNF) if it is a conjunction of clauses. As an example, all of the following formulas are in CNF:
\begin{equation*}
  x \vee y, \quad x \wedge y, \quad (\overline{x} \vee y) \wedge \overline{z}, \quad (x \vee \overline{y}) \wedge (\overline{x} \vee y \vee z) \wedge (\overline{x} \vee \overline{z}).
\end{equation*}
Given a Boolean formula $F$, an \emph{$F$-assignment} is a set of its literals that does not contain both a literal and its negation. We call an $F$-assignment $\mathbf{T}$ \emph{true} if every clause of $F$ contains a literal from $\mathbf{T}$; $F$ is called \emph{satisfiable} if there is an $F$-assignment. The \emph{CNF-satisfiability problem} is the set $\mathbf{CNF\text{-}SAT}$ of satisfiable Boolean formulas in CNF. Our main result is the following theorem.

\begin{theorem} \label{T:1}
There is an algorithm that, for a given Boolean formula $F$ in CNF, stops in time bounded by $O\left(|F|^4\right)$ and answers yes together with a true $F$-assignment if $F$ is satisfiable and no otherwise.
\end{theorem}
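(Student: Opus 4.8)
The plan is to design a constructive algorithm that attempts to build a true $F$-assignment incrementally while maintaining enough derived information to detect unsatisfiability early. Given the set-of-literals formalism already established, the natural candidate is a saturation procedure: starting from the clauses of $F$, one repeatedly applies a sound inference rule (most plausibly resolution, combined with subsumption and unit propagation) and argues that the procedure terminates with either the empty clause, certifying unsatisfiability, or a saturated clause set from which a satisfying assignment can be read off directly by a greedy pass. The target running time $O\left(|F|^4\right)$ then has to come from a bound showing that only polynomially many clauses are ever present at once, so that each inference step is cheap and the number of steps is controlled.

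First I would fix precisely the invariant the algorithm maintains, the most promising being that at every stage the current clause set is logically equivalent to $F$ and that every clause of some restricted form (a bounded width, or resolvents respecting a fixed ordering) has already been generated. One then proves two things. The soundness direction of correctness is routine, since resolution, subsumption, and unit propagation all preserve satisfiability status; the subtle direction is completeness, namely that saturation genuinely exposes the empty clause whenever $F$ is unsatisfiable, and that in the satisfiable case the greedy extraction never gets stuck. I would isolate this as a refutation-completeness lemma for the restricted inference system actually used, and a separate extraction lemma guaranteeing that a saturated, empty-clause-free set yields a true $F$-assignment.

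The complexity analysis is the step on which the entire argument stands or falls, and it is the one I expect to be the genuine obstacle. To get a polynomial bound one must show that the saturation produces only polynomially many distinct clauses; but unrestricted resolution can generate exponentially many clauses, and the classical lower bounds on resolution proof length (for instance on the pigeonhole formulas) show that no sound and refutation-complete resolution-based saturation can run in polynomial time on every input. Any valid proof of the stated bound must therefore rest on a structural restriction — a width cap, an ordering or subsumption discipline, or a specialized data structure — that keeps the clause count polynomial while preserving completeness. Reconciling these two demands is exactly the point at which a successful argument would force $\mathbf{P} = \mathbf{NP}$, so I would scrutinize the termination-and-counting lemma above all others: it is here that a hidden exponential blow-up in the number of derived clauses, or a silent loss of completeness in the restricted inference system, is most likely to be concealed, and I would not accept the theorem until that lemma is checked against known hard instances.
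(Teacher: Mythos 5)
Your proposal, judged as a proof of Theorem~\ref{T:1}, has a genuine gap, and an unavoidable one: you never fix a concrete inference system, you prove neither the refutation-completeness lemma nor the clause-counting lemma that you yourself identify as the crux, and you end by arguing that no such argument can exist. So nothing is actually established. But your diagnosis is the right one, and it is worth setting it against what the paper actually does, because the paper's proof fails at exactly the point you flagged --- not in the complexity accounting, but in a silent loss of completeness.

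The paper does not saturate clauses by resolution. For each literal $z$ it grows a set of literals $\mathbf{S}_F(z)$ greedily: a literal $u$ may be added once every clause containing $\overline{u}$ is already covered by the clauses touched by the current set, and $u$ covers some new clause. It calls $z$ \emph{redundant} when all clauses containing $\overline{z}$ end up covered, deletes the covered clauses $\mathbf{C}_F(z)$ (Lemma~\ref{L:1}; this soundness half is correct), and claims in Lemma~\ref{L:2} that a formula with no redundant literal and at least one clause is unsatisfiable. That completeness claim is false. Consider
\begin{equation*}
F \;=\; (\overline{x_1} \vee x_2) \wedge (x_1 \vee \overline{x_2}) \wedge (\overline{x_2} \vee x_3) \wedge (x_2 \vee \overline{x_3}) \wedge (\overline{x_3} \vee x_1) \wedge (x_3 \vee \overline{x_1}),
\end{equation*}
the 2-CNF encoding $x_1 = x_2 = x_3$, satisfiable by setting all variables true. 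It is invariant under the rotation $x_1 \to x_2 \to x_3 \to x_1$ and under complementation of all variables, and these symmetries act transitively on the six literals, so it suffices to check $z = x_1$. There $[\{x_1\}]_F = \{x_1 \vee \overline{x_2},\ \overline{x_3} \vee x_1\}$, and for each of the four literals $u \notin \{x_1, \overline{x_1}\}$ some clause containing $\overline{u}$ lies outside this set (e.g.\ $\overline{x_2}\vee x_3$ for $u = x_2$, $x_2 \vee \overline{x_3}$ for $u = x_3$); hence no literal is ever added, $\mathbf{S}_F(x_1) = \{x_1\}$, and since $[\{\overline{x_1}\}]_F \nsubseteq [\{x_1\}]_F$, the literal $x_1$ is not redundant. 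So $F$ is reduced, nonempty, and satisfiable, contradicting Lemma~\ref{L:2}, and the paper's algorithm answers no on this satisfiable instance. (Tracing the induction in the paper's proof on this $F$ shows where it dies: the variable-elimination step creates tautological resolvents such as $x_2 \vee \overline{x_2}$, which keep the eliminated variable alive, so that neither the claim that $G$ has fewer variables nor the claim that $G$ is reduced survives.) Two corrections to your own account, then: the hidden flaw is of the completeness type you predicted rather than an exponential blow-up, and it is exposed not by hard instances like the pigeonhole formulas but already by a trivially satisfiable 2-CNF --- a class the paper itself cites as solvable in linear time.
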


As a corollary, we have that $\mathbf{CNF\text{-}SAT} \in \mathbf{P}$,  where $\mathbf{P}$ is the class of sets of strings recognizable by a deterministic Turing machine in time bounded by a polynomial in the length of the input.


\section{Proofs of statements} \label{S:3}

Given a Boolean formula $F$ in CNF, set of literals $\mathbf{S}$ and literal $z$. Let $[\mathbf{S}]_F$ be the set of clauses of $F$ containing literals from $\mathbf{S}$, $\langle\mathbf{S}\rangle_{F,z}$ the set of literals $u \neq \overline{z}$ such that $[\{u\}]_F \nsubseteq [\mathbf{S}]_F$ and $[\{\overline{u}\}]_F \subseteq [\mathbf{S}]_F$, $\mathbf{S}_F^0(z) = \{z\}$ and $\mathbf{S}_F^{k+1}(z) = \mathbf{S}_F^k(z) \cup \langle\mathbf{S}_F^k(z)\rangle_{F,z}$ for all $k \geq 0$.

Define $\mathbf{S}_F(z) = \mathbf{S}_F^{|F|}(z)$ and $\mathbf{C}_F(z) = [\mathbf{S}_F(z)]_F$, where $|F|$ is the number of symbols contained in $F$. As an example, consider the following Boolean formula in CNF
\begin{equation*}
  (x_1 \vee \overline{x_2} \vee \overline{x_3}) \wedge (x_2 \vee x_3 \vee \overline{x_4}) \wedge (\overline{x_3} \vee x_4) \wedge (\overline{x_1} \vee \overline{x_4}) \wedge x_4,
\end{equation*}
then $\mathbf{S}_F(x_1) = \{x_1, x_2, \overline{x_3}\}$ and $\mathbf{C}_F(x_1) = \{x_1 \vee \overline{x_2} \vee \overline{x_3}, x_2 \vee x_3 \vee \overline{x_4}, \overline{x_3} \vee x_4\}$. It is not hard to check that $\mathbf{S}_F$, $\mathbf{C}_F$ are computable in time bounded by $O\left(|F|^2\right)$ and satisfy the following conditions:
\begin{equation*}
  \text{(1)} \quad u \in \mathbf{S}_F(z) \ \Longrightarrow \ \overline{u} \notin \mathbf{S}_F(z); \qquad
  \text{(2)} \quad u \neq z,\ [\{u\}]_F \subseteq \mathbf{C}_F(z) \ \Longrightarrow \ [\{\overline{u}\}]_F \subseteq \mathbf{C}_F(z);
\end{equation*}
for all literal $u$ of $F$.

A literal $z$ is called \emph{redundant} in $F$ if $[\{\overline{z}\}]_F \subseteq \mathbf{C}_F(z)$. We call $F$ \emph{reduced} if it does not contain redundant literals. Denote by $\rho_z(F)$ the Boolean formula in CNF obtained from $F$ by removing all clauses from $\mathbf{C}_F(z)$. Prove auxiliary lemmas.

\newpage

\begin{lemma} \label{L:1}
If $z$ is redundant, then $\rho_z(F)$ is satisfiable iff $F$ is satisfiable.
\end{lemma}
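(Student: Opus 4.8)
\emph{Proof proposal.} The plan is to prove the two implications separately, with the converse direction carrying all the weight. The forward direction is immediate: since $\rho_z(F)$ is obtained from $F$ only by deleting clauses, every clause of $\rho_z(F)$ is a clause of $F$, so the restriction of any true $F$-assignment to the literals occurring in $\rho_z(F)$ is a true $\rho_z(F)$-assignment. I therefore concentrate on showing that satisfiability of $\rho_z(F)$ yields satisfiability of $F$.

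The crux will be a structural observation: when $z$ is redundant, no variable occurring in $\mathbf{S}_F(z)$ occurs in $\rho_z(F)$. To establish this, fix $u \in \mathbf{S}_F(z)$. Because $\mathbf{C}_F(z) = [\mathbf{S}_F(z)]_F$, every clause containing $u$ already lies in $\mathbf{C}_F(z)$, i.e. $[\{u\}]_F \subseteq \mathbf{C}_F(z)$. For the opposite literal I would split into two cases. If $u = z$, then $[\{\overline{u}\}]_F = [\{\overline{z}\}]_F \subseteq \mathbf{C}_F(z)$ is exactly the hypothesis that $z$ is redundant. If $u \neq z$, then condition (2) applied to $u$ turns $[\{u\}]_F \subseteq \mathbf{C}_F(z)$ into $[\{\overline{u}\}]_F \subseteq \mathbf{C}_F(z)$. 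In either case both $[\{u\}]_F$ and $[\{\overline{u}\}]_F$ are contained in $\mathbf{C}_F(z)$, so every clause mentioning the variable of $u$ is removed when forming $\rho_z(F)$; hence that variable is absent from $\rho_z(F)$.

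With this in hand the assembly is routine. Let $\mathbf{T}'$ be a true $\rho_z(F)$-assignment and set $\mathbf{T} = \mathbf{T}' \cup \mathbf{S}_F(z)$. The set $\mathbf{S}_F(z)$ is consistent by condition (1), $\mathbf{T}'$ is consistent by assumption, and the two use disjoint variables by the structural observation, so $\mathbf{T}$ contains no literal together with its negation and is a genuine $F$-assignment. To see it is true, split the clauses of $F$: each clause in $\mathbf{C}_F(z) = [\mathbf{S}_F(z)]_F$ contains a literal of $\mathbf{S}_F(z) \subseteq \mathbf{T}$, while each remaining clause is a clause of $\rho_z(F)$ and hence contains a literal of $\mathbf{T}' \subseteq \mathbf{T}$. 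Thus every clause of $F$ meets $\mathbf{T}$, and $F$ is satisfiable.

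I expect the only real obstacle to be pinpointing exactly why the variables separate, that is, recognizing that redundancy is precisely the extra ingredient which, together with condition (2), upgrades the automatic inclusion $[\{u\}]_F \subseteq \mathbf{C}_F(z)$ to two-sided coverage for every $u \in \mathbf{S}_F(z)$, including the case $u = z$ where condition (2) does not apply. Once this disjointness is secured, consistency of the combined assignment and the clause-by-clause verification follow with no further difficulty.
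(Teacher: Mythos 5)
Your proof is correct and takes essentially the same route as the paper: both reduce to the nontrivial direction and extend a true $\rho_z(F)$-assignment by $\mathbf{S}_F(z)$, using condition (1) for consistency and the fact that the variables of $\mathbf{S}_F(z)$ do not occur in $\rho_z(F)$. Your explicit justification of that disjointness---condition (2) for $u \neq z$ plus redundancy for the case $u = z$---is precisely the detail the paper compresses into ``by the definition of $\mathbf{C}_F(z)$.''
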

\begin{proof}
It is sufficient to prove that $\rho_z(F)$ is satisfiable implies $F$ is satisfiable. Let $\rho_z(F)$ is satisfiable and $\mathbf{T}$ is a true $\rho_z(F)$-assignment. If $\mathbf{S}_F(z) = \left\{x_1^{a_1}, \ldots, x_n^{a_n}\right\}$, then $x_1, \ldots, x_n$ are pairwise distinct by~(1) and do not occur in $\rho_z(F)$ by the definition of $\mathbf{C}_F(z)$. Therefore, $\mathbf{T} \cup \mathbf{S}_F(z)$ is a true $F$-assignment and so $F$ is satisfiable.
\end{proof}

\begin{lemma} \label{L:2}
If $F$ is reduced, then $F$ is satisfiable iff $F$ does not contain clauses.
\end{lemma}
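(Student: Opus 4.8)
The plan is to treat the two directions of the biconditional separately. The direction ``$F$ has no clauses $\Rightarrow$ $F$ is satisfiable'' needs no hypothesis on reducedness: the empty set is an $F$-assignment, and it is vacuously true because there is no clause to meet, so $F$ is satisfiable. All the work lies in the converse, for which I would argue the contrapositive: if $F$ is reduced and contains at least one clause then $F$ is unsatisfiable; equivalently, if $F$ contains a clause and is satisfiable, then $F$ is not reduced, i.e. some literal of $F$ is redundant.

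The guiding idea is that redundancy is exactly an autarky condition. By property~(2), every $u \in \mathbf{S}_F(z)$ with $u \neq z$ satisfies $[\{\overline u\}]_F \subseteq \mathbf{C}_F(z)$, so the only clauses that can contain the negation of a literal of $\mathbf{S}_F(z)$ while lying outside $\mathbf{C}_F(z)$ are those containing $\overline z$. Hence $z$ is redundant exactly when $[\{\overline z\}]_F \subseteq \mathbf{C}_F(z)$ holds as well, that is, when every clause meeting $\mathbf{S}_F(z)$ through a negated literal already lies in $\mathbf{C}_F(z)$ and is thus satisfied by $\mathbf{S}_F(z)$ — in other words, when $\mathbf{S}_F(z)$ is an autarky. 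From this viewpoint Lemma~\ref{L:1} is an instance of the autarky principle, and the remaining claim reduces to the following: every satisfiable CNF containing a clause admits an autarky of the special syntactic form $\mathbf{S}_F(z)$. I would fix a true $F$-assignment $\mathbf{T}$, pick any clause $C$, and choose a literal $z \in C \cap \mathbf{T}$, which exists because $\mathbf{T}$ is true; the goal is to show that this $z$, or some literal reachable from it, is redundant.

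To carry this out I would attempt to establish $[\{\overline z\}]_F \subseteq \mathbf{C}_F(z)$ by a descent argument. If $z$ is not redundant there is a clause $D$ with $\overline z \in D$ and $D \notin \mathbf{C}_F(z)$; since $\mathbf{T}$ is true and $\overline z \notin \mathbf{T}$, the clause $D$ must contain another literal $w \in \mathbf{T}$ lying outside $\mathbf{S}_F(z)$. Iterating the passage $z \mapsto w$ yields a sequence of literals of $\mathbf{T}$, and I would aim to prove that along it the covered clauses, or the sets $\mathbf{C}_F(\cdot)$, strictly grow, so that finiteness of $F$ forces termination at a redundant literal. The main obstacle is precisely this monotonicity: the closure $\mathbf{S}_F(z)$ is generated greedily and purely syntactically, with no reference to $\mathbf{T}$, so a priori the components of distinct literals are incomparable and the process may even adjoin literals that conflict with $\mathbf{T}$. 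Reconciling the combinatorial closure $\mathbf{S}_F(z)$ with the semantic assignment $\mathbf{T}$ — that is, showing satisfiability genuinely forces one such component to close up into an autarky — is the delicate heart of the lemma, and is where I expect the argument to demand the most care.
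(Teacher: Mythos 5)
Your reformulation is sound as far as it goes: the direction ``no clauses $\Rightarrow$ satisfiable'' is indeed immediate, your observation that $z$ is redundant exactly when $\mathbf{S}_F(z)$ is an autarky is correct (for $u \in \mathbf{S}_F(z)$ with $u \neq z$ one has $[\{u\}]_F \subseteq \mathbf{C}_F(z)$ trivially, hence $[\{\overline{u}\}]_F \subseteq \mathbf{C}_F(z)$ by~(2), so only the clauses containing $\overline{z}$ are at issue), and the hard direction is equivalent to the claim that every satisfiable CNF containing a clause has a redundant literal. But what you offer for that claim is a plan whose decisive step is missing, as you yourself concede. Concretely: if $z \in \mathbf{T}$ is not redundant, you pick a clause $D$ with $\overline{z} \in D$ and $D \notin \mathbf{C}_F(z)$, then a literal $w \in D \cap \mathbf{T}$ with $w \notin \mathbf{S}_F(z)$, and pass from $z$ to $w$. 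Nothing forces any quantity to grow along this passage: $\mathbf{S}_F(w)$ is regenerated from scratch starting from $\{w\}$, so $\mathbf{C}_F(w)$ need not contain $\mathbf{C}_F(z)$, need not contain the clause you started from, and nothing excludes that the witness clause for the non-redundancy of $w$ leads straight back to $z$, so that the walk alternates between two literals of $\mathbf{T}$ forever. Since the walk gets stuck only at redundant literals, its termination \emph{is} the assertion to be proved; the proposal in effect assumes what it must establish. This is a genuine gap, not a deferred technicality.

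The paper closes this gap by abandoning any walk inside a satisfying assignment and arguing by induction on the number of variables instead. Given a reduced $F$, it fixes a variable $x$ and eliminates it by resolution: the clauses $C^a_i \vee x^a$ containing exactly one of $x$, $\overline{x}$ are removed and all resolvents $C^0_i \vee C^1_j$ are added, yielding $G$. Reducedness of $F$ is used twice: first to ensure $n_0 + n_1 > 0$, so that $G$ still contains a clause; second --- and this is the technical core, playing the role of your missing monotonicity --- to prove $[\mathbf{S}_G^k(z)]_F \subseteq \mathbf{C}_F(z)$ for every literal $z$ and every $k$, from which it follows that $G$ is again reduced. The induction hypothesis then makes $G$ unsatisfiable, and since any true $F$-assignment would satisfy some $C^0_i$ or $C^1_j$ in every resolvent, $F$ is unsatisfiable as well. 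Note where the difficulty is relocated: the paper never has to \emph{find} a redundant literal in a satisfiable formula; it only has to show that redundancy-freeness is preserved under one step of variable elimination, a purely syntactic statement proved by induction on the closure index $k$. If you wish to salvage your autarky-based route you would have to invent a well-founded measure for your descent, which is exactly the ingredient you could not supply; the paper's variable-elimination induction is the substitute for it.
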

\begin{proof}
By induction on the number $N_F$ of variables in $F$. If $N_F = 0$, the statement holds. Let the induction assumption holds for all formula with less than $N_F > 0$ variables, prove it for $F$. Consider a variable $x$ of $F$ and let $C^a_1 \vee x^a, \ldots, C^a_{n_a} \vee x^a$ are the clauses of $[\{x^a\}]_F \setminus [\{\overline{x^a}\}]_F$ for some $n_a \geq 0$. Define a Boolean formula $G$ obtained from $F$ by removing all clauses $C^a_i \vee x^a$  and adding clauses $C^0_i \vee C^1_j$ for all $i$ and $j$. Since $F$ is reduced, $n_0 + n_1 > 0$ and so $G$ contains at least one clause.

First, prove that $[\mathbf{S}_G^k(z)]_F \subseteq \mathbf{C}_F(z)$ for all literal $z$ by induction on $k \geq 0$. If $k = 0$, then $\mathbf{S}_G^k(z) = \{z\}$ and so the statement holds. Let the statement holds for some $k \geq 0$ and $u \in \langle\mathbf{S}_G^k(z)\rangle_{G,z}$. By the definition, $u \neq \overline{z}$, $[\{u\}]_G \nsubseteq [\mathbf{S}_G^k(z)]_G$ and $[\{\overline{u}\}]_G \subseteq [\mathbf{S}_G^k(z)]_G$. Assume that $[\{u\}]_F \nsubseteq \mathbf{C}_F(z)$. Then $[\{\overline{u}\}]_F \nsubseteq \mathbf{C}_F(z)$ by~(2). Consider a clause $C \in [\{\overline{u}\}]_F \setminus \mathbf{C}_F(z)$. If $C \in [\{\overline{u}\}]_G$, then $C \in [\mathbf{S}_G^k(z)]_G$. Therefore, $C \in [\mathbf{S}_G^k(z)]_F \subseteq \mathbf{C}_F(z)$, which is impossible. Otherwise, $C = C^a_{i_a} \vee x^a$ for some $a$, $i_a$ and so $C^0_{i_0} \vee C^1_{i_1} \in [\{\overline{u}\}]_G \subseteq [\mathbf{S}_G^k(z)]_G$ for all $i_{\overline{a}}$. Then either $C \in [\mathbf{S}_G^k(z)]_F \subseteq \mathbf{C}_F(z)$, or $[\{x^{\overline{a}}\}]_F \subseteq [\mathbf{S}_G^k(z)]_F \subseteq \mathbf{C}_F(z)$ and so $C \in [\{x^a\}]_F \subseteq \mathbf{C}_F(z)$ by~(2), which is impossible. Therefore $[\{u\}]_F \subseteq \mathbf{C}_F(z)$ and so $[\mathbf{S}_G^{k+1}(z)]_F \subseteq \mathbf{C}_F(z)$. Particularly, we have $[\mathbf{S}_G(z)]_F \subseteq \mathbf{C}_F(z)$ for all literal $z$.

Next, assume that $G$ contains a redundant literal $z$. Then $[\{\overline{z}\}]_G \subseteq \mathbf{C}_G(z)$ and $[\{\overline{z}\}]_F \nsubseteq \mathbf{C}_F(z)$. Let $C \in [\{\overline{z}\}]_F \setminus \mathbf{C}_F(z)$. If $C \in [\{\overline{z}\}]_G$, then $C \in [\mathbf{S}_G(z)]_G$. Therefore, $C \in [\mathbf{S}_G(z)]_F \subseteq \mathbf{C}_F(z)$, which is impossible. Otherwise, $C = C^a_{i_a} \vee x^a$ for some $a$, $i_a$ and so $C^0_{i_0} \vee C^1_{i_1} \in [\{\overline{z}\}]_G \subseteq [\mathbf{S}_G(z)]_G$ for all $i_{\overline{a}}$. Then either $C \in [\mathbf{S}_G(z)]_F \subseteq \mathbf{C}_F(z)$, or $[\{x^{\overline{a}}\}]_F \subseteq [\mathbf{S}_G(z)]_F \subseteq \mathbf{C}_F(z)$ and so $C \in [\{x^a\}]_F \subseteq \mathbf{C}_F(z)$ by~(2), which is impossible. Therefore, $G$ is reduced and by the induction assumption $G$ is not satisfiable.

Finally, if there is a true $F$-assignment $\mathbf{T}$, then for all $i$, $j$ either $C^0_i$ or $C^1_j$ contains a literal from $\mathbf{T}$. So, $\mathbf{T}$ is a true $G$-assignment, which is impossible. Thus, $F$ is not satisfiable.
\end{proof}

Now describe the following algorithm. Let $\mathbf{T} = \emptyset$. While $F$ contains a redundant literal $z$, reduce $F$ to $\rho_z(F)$ and extend $\mathbf{T}$ by adding literals from $\mathbf{S}_F(z)$. If $F$ is reduced, answers yes together with $\mathbf{T}$ if $F$ does not contain clauses and no otherwise.

Note that $|\rho_z(F)| < |F|$ for all redundant literal $z$, every step can be performed in time bounded by $O\left(|F|^3\right)$, and the obtained set $\mathbf{T}$ is an $F$-assignment such that every removed clause of $F$ contains a literal from $\mathbf{T}$. Due to Lemmas~\ref{L:1} and~\ref{L:2}, the algorithm stops in time bounded by $O\left(|F|^4\right)$ and answers yes together with a true $F$-assignment $\mathbf{T}$ if $F$ is satisfiable and no otherwise. Theorem~\ref{T:1} is proved.



\end{document}